\documentclass[11pt]{article}

\usepackage[margin=1in]{geometry}
\usepackage{amsmath,amssymb,amsthm}
\usepackage{mathtools}
\usepackage{booktabs}
\usepackage{graphicx}
\usepackage{hyperref}
\usepackage{xcolor}
\usepackage{enumitem}
\usepackage{natbib}

\hypersetup{colorlinks=true,linkcolor=blue,citecolor=blue,urlcolor=blue}

\theoremstyle{plain}
\newtheorem{theorem}{Theorem}
\newtheorem{proposition}{Proposition}

\newtheorem{corollary}{Corollary}
\theoremstyle{definition}
\newtheorem{definition}{Definition}

\newtheorem{assumption}{Assumption}

\newcommand{\Var}{\operatorname{Var}}
\newcommand{\Cov}{\operatorname{Cov}}

\newcommand{\E}{\mathbb{E}}

\newcommand{\nbar}{\bar{n}}
\newcommand{\ybar}{\bar{y}}
\newcommand{\gbar}{\bar{g}}
\newcommand{\ebar}{\bar{e}}
\newcommand{\tauhat}{\hat{\tau}}
\newcommand{\Smacro}{S_{\mathrm{macro}}}
\newcommand{\Sres}{S_{\mathrm{res}}}
\newcommand{\Scl}{S_{\mathrm{cl}}}
\newcommand{\Stime}{S_{\mathrm{time}}}
\newcommand{\Sint}{S_{\mathrm{int}}}
\newcommand{\sigtot}{\sigma^2_{\mathrm{total}}}
\newcommand{\cv}{\mathrm{cv}}
\newcommand{\MSE}{\mathrm{MSE}}
\newcommand{\MSEw}{\MSE_{\mathrm{within}}}
\newcommand{\MSEm}{\MSE_{\mathrm{macro}}}
\newcommand{\Lpow}{\mathcal{L}_{\mathrm{power}}}
\DeclareMathOperator*{\argmin}{arg\,min}

\title{Power-Optimal Covariate Adjustment for Switchback Experiments}

\author{Sergei Pankratev \\ DoorDash, Inc.}

\date{}

\begin{document}
\maketitle

\begin{abstract}
In switchback experiments with unequal cluster sizes, outcome dispersion across randomization units inflates estimator variance and limits statistical power.
Standard control-using-prediction-as-covariate (CUPAC) adjustment may be suboptimal for variance reduction in this setting, because in its basic form it targets overall predictive accuracy and does not distinguish between the components of variance that vary across the randomization units of switchback experiments and those that vary across individual observations, even though these components contribute unequally to estimator variance.
We propose a power-optimal variance reduction methodology via CUPAC that balances prediction of the noise between and within randomization units to achieve maximum statistical power.
The methodology utilizes the framework for decomposition of the variance of the treatment-effect estimator for switchback experiments, and adapts both the outcome prediction and the analysis-time residualization to minimize the treatment-effect variance.
The study first develops the theoretical framework for the power-optimal CUPAC.
We then validate the theoretical framework through an extensive Monte Carlo simulation study.
Finally, we discuss the practical considerations of the proposed methodology, including its potential efficiency gains and limitations.
\end{abstract}

\section{Introduction}
\label{sec:intro}

Variance reduction via regression adjustment is one of the most effective techniques in online experimentation.
In standard settings, controlled-experiment using pre-experiment data (CUPED)~\citep{deng2013} adjusts the outcome using a pre-experiment covariate, while control using prediction as covariate (CUPAC) replaces the hand-chosen covariate with the prediction of a machine-learned model~\citep{poyarkov2016boosted}, so that the control variate is the conditional expectation of the outcome given a rich feature set~\citep{guo2021,jin2023}.
In both cases, the variance reduction realized at analysis time is governed by the correlation between the control variate and the outcome at the unit of analysis.
The control-variate model is, in practice, trained to minimize unit-level mean squared error (MSE).
This is a natural default in standard A/B tests: when units are randomized and analyzed at the same level, with no nested structure, there is a single relevant variance component, so the level at which the covariate predicts is immaterial and minimizing unit-level squared error directly maximizes the achievable variance reduction.

In switchback experiments, however, this standard default breaks down.
The randomization unit is a spatial cluster crossed with a time window---which we call a \emph{cell}---and these cells vary widely in the number of events they contain. It is this imbalance in cell size that amplifies how strongly the macro (between-cell) variation contributes to the variance of the unit-level metric.
Consequently, the variance of the treatment-effect estimator is dominated not by within-cell idiosyncratic noise but by the between-cell variation that the switchback design cannot average away~\citep{bojinov2023,pankratev2026powerful}.
A model trained on unit-level MSE may spend most of its capacity reducing within-cell error---where the squared-error mass lives---while doing comparatively little for the between-cell signal that actually drives statistical power. 
As a result, it can be suboptimal for variance reduction and power in this setting.

In this paper, we propose a unified framework for CUPAC training that maximizes statistical power in switchback experiments.
The framework re-weights both the covariate training objective and the analysis-time adjustment against the same measure of switchback variance.
Standard practice trains the control variate on unit-level MSE and estimates the adjustment coefficient by unit-level OLS; each step is optimal when randomization and analysis coincide at the unit level, but both are misaligned once treatment is assigned at the cell level and the metric aggregates finer-grained events.
Our framework corrects both stages: from the switchback design we derive the power-optimal training loss and the matched analysis coefficient, each re-weighted by exactly the factor the design dictates.

At its core is an exact decomposition of the switchback estimator's variance into a within-cell part and a between-cell part.
Because the design averages away within-cell noise inside each cell but cannot average away between-cell variation, it is the between-cell part that governs power---and its weight relative to the within-cell part grows with the mean cell size and with the imbalance in cell sizes.
This single observation identifies both \emph{where} standard practice goes wrong and \emph{how much} each stage must be re-weighted, and it is the common thread linking training and analysis.

As noted above, the framework has two components, one realigning each stage. The first component realigns covariate training.
Ordinary MSE implicitly treats within- and between-cell prediction errors as equally important, so a model fit to it spends most of its capacity chasing within-cell noise.
We instead train the control variate with a composite loss that deliberately up-weights between-cell error by the design-determined factor, steering the model's limited capacity toward the between-cell signal that actually drives power; this objective is a tight upper bound on the variance realized after analysis and applies directly to the model classes used in practice.

The second component realigns the analysis.
Since~\citet{deng2013}, the adjustment coefficient has been estimated by ordinary regression, which under-weights the between-cell relationship in exactly the same way, so pairing even a well-trained covariate with it leaves variance reduction on the table.
We show the coefficient should instead be estimated with the same between-cell emphasis used to train the covariate.
This closes the loop: the training builds between-cell predictive quality, and the matched analysis is what converts that quality into realized power.

We validate these predictions in a Monte Carlo study.
Across designs that range from balanced to heavily within-cell-dominated, the aligned framework recovers substantial power precisely where the theory predicts it should---as within-cell noise comes to dominate---while adding little where standard CUPAC already suffices.

Finally, we characterize when alignment yields the largest gains and discuss the main practical considerations it raises---chiefly the bias of the between-cell target in small cells and the overfitting risk that comes from concentrating the objective on a limited number of cells.

The paper is organized as follows.
Section~\ref{sec:theory} develops the theoretical framework: it formalizes the design and data-generating process, decomposes the prediction error, derives the estimator variance as an affine functional of the within- and between-cell losses, and---on that basis---aligns the two coordinated choices to the design, namely the power-optimal training loss and the matched analysis-time coefficient.
Section~\ref{sec:impl} provides the implementation details and practical considerations.
Section~\ref{sec:sims} reports a Monte Carlo simulation study validating the predicted alignment gains.
Section~\ref{sec:discussion} discusses implications and limitations, and Section~\ref{sec:conclusion} concludes.

\section{Literature Review}
\label{sec:related}

This work sits at the intersection of two literatures: variance reduction for online controlled experiments, and switchback and cluster-randomized designs.

Variance reduction through regression adjustment is a mature line of work in online experimentation.
CUPED introduced the use of pre-experiment covariates to reduce the variance of the treatment-effect estimator without biasing it~\citep{deng2013}.
CUPAC generalized the fixed covariate to a machine-learned prediction of the outcome, using boosted trees to raise the covariate's correlation with the outcome and hence the achievable variance reduction~\citep{poyarkov2016boosted,li2024cupac}.
Subsequent work has studied machine-learned control variates more broadly and characterized the variance-optimal use of a given predictor~\citep{guo2021,jin2023}.
The underlying principle---calibrating a predictor so as to minimize the variance of a downstream estimator rather than its own prediction error---is shared with prediction-powered inference~\citep{angelopoulos2023}.
These methods, and empirical comparisons among them, have been developed almost exclusively in the standard A/B setting with independent user-level randomization~\citep{staponaite2025variance,kohavi2020trustworthy}, where the unit of randomization and the unit of analysis coincide and unit-level prediction error is the appropriate training target.
A companion study evaluates these methods---CUPED, CUPAC, and doubly robust estimation---directly in the switchback setting, mapping across design regimes when each yields the largest efficiency gains~\citep{pankratev2026design}; it quantifies the variance reduction available but treats the covariate model as given rather than asking how it should be trained.

A second literature studies the design and analysis of switchback experiments, in which treatment is assigned to cluster$\times$time-window cells~\citep{bojinov2019time,bojinov2023}.
This work has focused primarily on point estimation, randomization inference, and the choice of switching interval under temporal interference~\citep{xiong2023data,hu2022switchback}, rather than on covariate adjustment.
A recurring theme is that power in clustered designs is governed not by the number of observations but by the number of randomization units, and is further eroded by imbalance in cluster sizes.
This cluster-size-imbalance penalty is the design-effect factor of classical survey sampling~\citep{kish1965survey} and cluster-randomized trials~\citep{eldridge2006sample}, in which unequal cluster sizes inflate the variance of the treatment-effect estimator by a factor that grows quadratically with the dispersion in cluster sizes.
\citet{pankratev2026powerful} makes this explicit for switchbacks, deriving the closed-form variance decomposition---separating within-cell from macro (between-cell) variance and carrying a penalty that grows with the imbalance in cluster sizes---on which the present analysis builds.

These two literatures have developed largely independently.
Variance-reduction methods are trained, evaluated, and applied at the unit level, which is optimal when randomization is individual; switchback power analyses characterize the variance of the unadjusted estimator but do not ask how a control variate should be trained to minimize it.
The present work connects them by recasting the switchback variance functional as a prediction loss and re-weighting the CUPAC workflow---covariate training and analysis-time adjustment---against the same design quantity $1+\nbar(1+\cv^2)$, an alignment that, to our knowledge, prior CUPAC practice does not pursue.

\section{Theoretical Framework}
\label{sec:theory}

The methodology is motivated by a single structural feature of switchback experiments: the variance of the treatment-effect estimator is dominated by macro (between-cell) variation rather than by within-cell idiosyncratic noise, and the contribution of that macro variation is further amplified by cluster-size imbalance through the factor $1+\cv^2$.
A control variate trained to minimize unit-level error therefore spends most of its capacity on the within-cell noise that the design averages away, while doing comparatively little for the between-cell signal that governs power.
Correcting this misalignment is not the task of the training loss alone: realigning the loss is necessary but only partial, because the analysis-time residualization coefficient is subject to the same distortion, and it contributes an additional layer of achievable variance reduction.
This section builds the framework that makes these statements precise and develops the aligned corrections.
We first formalize the design and data-generating process, then decompose unit-level prediction error into orthogonal within-cell and between-cell losses and express the estimator variance as an affine functional of the two.
This identity exposes a single design quantity, $1+\nbar(1+\cv^2)$, by which the conventional objectives under-weight the between-cell component, and it is the common thread through the two coordinated choices we then align to it: the training loss and the analysis-time coefficient.

\subsection{Setup and Notation}
\label{sec:setup}

We model the outcome of interest via a multilevel data-generating process (DGP).
Randomization cells are indexed by $b=(\mathrm{cl},t)$, a spatial cluster $\mathrm{cl}$ crossed with a time window $t$, and the $n_b$ units inside cell $b$ are indexed by $i=1,\dots,n_b$.
A generic observation is thus the pair $(b,i)$, and every unit-level quantity carries the cell $b$ in its subscript so that membership is explicit; cell-level aggregates carry the subscript $b$ alone.
The outcome of unit $i$ in cell $b$ is modeled as:
\begin{equation}
\label{eq:model}
y_{b,i} = \mu + \alpha_{\mathrm{cl}} + \beta_t + \gamma_{\mathrm{cl},t} + \varepsilon_{b,i}.
\end{equation}
Here $\mu$ is the grand mean, $\alpha_{\mathrm{cl}}$ is the spatial cluster effect, $\beta_t$ is the time-window effect, $\gamma_{\mathrm{cl},t}$ is the cluster-by-time interaction, and $\varepsilon_{b,i}$ is the individual idiosyncratic deviation.
The $\alpha,\beta,\gamma$ terms are random and differ across randomization cells but are constant within them, whereas $\varepsilon_{b,i}$ varies across individual units within a cell.

The four components are mutually orthogonal, with variances $\Var(\alpha)=\sigma^2_{\mathrm{cl}}$, $\Var(\beta)=\sigma^2_{\mathrm{time}}$, $\Var(\gamma)=\sigma^2_{\mathrm{int}}$, and $\Var(\varepsilon)=\sigma^2_{\mathrm{res}}$, so the total outcome variance is $\sigtot = \sigma^2_{\mathrm{cl}}+\sigma^2_{\mathrm{time}}+\sigma^2_{\mathrm{int}}+\sigma^2_{\mathrm{res}}$.
The idiosyncratic within-cell variance $\sigma^2_{\mathrm{res}}$ represents the residual variance, capturing the variation of individual units within any given cell.
Conversely, the sum of the spatial, temporal, and interaction variances, $\sigma^2_{\mathrm{macro}} = \sigma^2_{\mathrm{cl}} + \sigma^2_{\mathrm{time}} + \sigma^2_{\mathrm{int}}$, represents the macro variance, which aggregates all variation operating at the cell level.
We define the variance \emph{shares} $S_k = \sigma^2_k/\sigtot$ and group the cell-level (``macro'') share:
\[
\Smacro = \Scl + \Stime + \Sint,
\qquad
\Sres = 1-\Smacro .
\]

The experiment is run over $J$ spatial clusters and $H$ time windows (hours), yielding $B=JH$ randomization \emph{cells}.
Treatment is assigned at the cell level: $W_b\in\{0,1\}$, balanced across the design.
Because cell sizes are typically non-uniform in practice, each cell $b$ contains $n_b$ measurement units.
We characterize the distribution of cell sizes by the mean cell density $\nbar$ and the squared coefficient of variation (CV) of cell sizes:
\[
\nbar = \frac{1}{B}\sum_{b=1}^B n_b,
\qquad
\cv^2 = \frac{1}{\nbar^2}\cdot\frac{1}{B}\sum_{b=1}^B (n_b-\nbar)^2 .
\]

The raw estimator variance follows directly from this design and DGP.
For the individual-level ordinary least squares (OLS) difference-in-means estimator $\tauhat$, the switchback power formula~\citep{pankratev2026powerful} gives:
\begin{equation}
\label{eq:rawvar}
\Var(\tauhat)\;\approx\;\frac{4\,\sigtot}{JH}
\left[\frac{\Sres}{\nbar}
+ \Smacro\!\left(\frac{1}{\nbar}+1+\cv^2\right)\right].
\end{equation}
The structural asymmetry between these two bracketed terms is central to our development: the residual (within-cell) variance is divided by $\nbar$, while the macro share is multiplied by $\tfrac{1}{\nbar}+1+\cv^2$.
For $\nbar$ in the hundreds, the macro share is weighted roughly $\nbar(1+\cv^2)$ times more heavily.
The $(1+\cv^2)$ factor is the cluster-size-imbalance penalty familiar from cluster-randomized trials, where unequal cluster sizes inflate the variance of the treatment-effect estimator~\citep{eldridge2006sample}.

CUPED and CUPAC introduce a control variate $g_{b,i}=g(X_{b,i})$---a pre-experiment covariate under CUPED, or a model prediction under CUPAC---where the pre-experiment features $X_{b,i}$ are correlates of either the residual within-cell variance or the variance across randomization cells.
The adjusted outcome is then formed as:
\begin{equation}
\label{eq:cv}
y_{b,i}^{\mathrm{adj}} = y_{b,i} - \theta\,(g_{b,i} - \E[g]),
\end{equation}
where $\theta$ is estimated at analysis time.
The estimator is then re-computed on $y^{\mathrm{adj}}$.
Our central object of study is the predictor $g$, and the guiding question: \emph{what loss should $g$ minimize so that $\Var(\tauhat)$ is minimized?}
Minimizing the same estimator variance $\Var(\tauhat)$ also shapes two companion choices that we study alongside $g$---the analysis-time coefficient $\theta$ and the weighting of observations by cell size---each contributing a further layer of variance reduction.

\subsection{Error Decomposition}

To formalize the predictive models at both the individual and cell levels, we define the respective outcomes and residuals.
At the individual level, the outcome $y_{b,i}$ for unit $i$ in cell $b$ is modeled as:
\begin{equation}
\label{eq:indiv_model}
y_{b,i} = g(X_{b,i}) + e_{b,i},
\end{equation}
where $g(X_{b,i})$ (or $g_{b,i}$) is the individual-level prediction and $e_{b,i}$ is the individual residual. To establish the connection to the underlying DGP~\eqref{eq:model}, let $M_b = \alpha_{\mathrm{cl}} + \beta_t + \gamma_{\mathrm{cl},t}$ represent the total cell-constant macro variation.
We can then equate the outcomes to write:
\begin{equation}
\label{eq:dgp_predictive}
g(X_{b,i}) + e_{b,i} = \mu + M_b + \varepsilon_{b,i},
\end{equation}
which implies that the prediction residual satisfies:
\begin{equation}
\label{eq:residual_prim}
e_{b,i} = (\mu + M_b + \varepsilon_{b,i}) - g(X_{b,i}) .
\end{equation}
Under this representation, the pre-experiment features $X_{b,i}$ may contain covariates that correlate with either the cell-constant macro component $M_b$ (the between-cell signal) or the individual idiosyncratic variation $\varepsilon_{b,i}$ (the within-cell noise).
If we were to aggregate the data and train a separate model directly at the cell level, the cell-mean outcome $\bar{y}_b = \frac{1}{n_b} \sum_{i \in b} y_{b,i}$ would be modeled as:
\begin{equation}
\label{eq:cell_model}
\bar{y}_b = h(X_b) + u_b,
\end{equation}
where $h(X_b)$ is the cell-level prediction and $u_b$ is the cell-level residual. In contrast, when we aggregate the individual predictions of $g(X_{b,i})$ post-hoc, the predicted cell mean is $\bar{g}_b = \frac{1}{n_b} \sum_{i \in b} g_{b,i}$, and the cell-mean residual is $\bar{e}_b = \bar{y}_b - \bar{g}_b$.

Decompose the individual residual into orthogonal within-cell and between-cell parts:
\begin{equation}
\label{eq:errdecomp}
e_{b,i} = \underbrace{(e_{b,i}-\ebar_b)}_{\text{within-cell}} + \underbrace{\ebar_b}_{\text{macro}}.
\end{equation}
By substituting~\eqref{eq:residual_prim} into~\eqref{eq:errdecomp}, we can express both residual components directly in terms of the DGP primitives.
For the within-cell residual, because $y_{b,i} - \ybar_b = \varepsilon_{b,i} - \bar{\varepsilon}_b$, we have:
\begin{equation}
\label{eq:within_dgp}
e_{b,i} - \ebar_b = (\varepsilon_{b,i} - \bar{\varepsilon}_b) - (g_{b,i} - \gbar_b) ,
\end{equation}
which depends purely on how well the within-cell variation of the predictor explains individual-level idiosyncratic noise.
For the macro residual, the cell-mean prediction residual satisfies:
\begin{equation}
\label{eq:macro_dgp}
\ebar_b = \big( \mu + M_b - \gbar_b \big) + \bar{\varepsilon}_b ,
\end{equation}
which represents the unpredicted macro-level variation $\mu + M_b - \gbar_b$ contaminated by the cell-average within-cell sampling noise $\bar{\varepsilon}_b$.
This formulation shows that the two prediction losses defined in~\eqref{eq:losses} below are driven by distinct structural components of the DGP.
We define these two prediction losses:
\begin{equation}
\label{eq:losses}
\MSEw(g) = \E\big[(e_{b,i}-\ebar_b)^2\big]
        = \E\big[\big((y_{b,i}-\ybar_b)-(g_{b,i}-\gbar_b)\big)^2\big],
\qquad
\MSEm(g) = \E\big[\,\ebar_b^{\,2}\,\big]
        = \E\big[(\ybar_b-\gbar_b)^2\big].
\end{equation}
$\MSEw$ is the within-cell mean squared error (how well $g$ tracks within-cell deviations); $\MSEm$ is the between-cell mean squared error (how well the cell-mean prediction $\gbar_b$ tracks the cell-mean outcome $\ybar_b$).
Because the decomposition~\eqref{eq:errdecomp} is orthogonal,
\begin{equation}
\label{eq:totaldecomp}
\MSE_{\mathrm{total}}(g) := \E[(y_{b,i}-g_{b,i})^2] = \MSEw(g) + \MSEm(g).
\end{equation}

\subsection{Power-Aligned Training Loss}
\label{sec:loss}

With the within- and between-cell losses in hand, we can express what actually governs power.
The result below shows that the adjusted estimator variance is a fixed linear combination of $\MSEw$ and $\MSEm$, with weights determined entirely by the design---so reducing variance amounts to reducing a specific weighted sum of the two losses.
To isolate the effect of $g$ on the estimator variance, we first fix the adjustment coefficient at the calibrated value $\theta=1$; this simplifying choice is relaxed, and the coefficient optimized, in Section~\ref{sec:theta}.

\begin{assumption}[Calibrated control variate]
\label{ass:calib}
The control variate $g$ of~\eqref{eq:cv} is used with coefficient $\theta=1$ and is conditionally unbiased, $\E[y_{b,i}\mid X_{b,i}]=g_{b,i}$.\footnote{Under these conditions the adjusted outcome equals the residual up to an additive constant: from~\eqref{eq:cv} with $\theta=1$, $y_{b,i}^{\mathrm{adj}}=y_{b,i}-(g_{b,i}-\E[g])=(y_{b,i}-g_{b,i})+\E[g]=e_{b,i}+\E[g]$, and the constant $\E[g]$ is common to both treatment arms, so it cancels in the difference-in-means and does not affect $\Var(\tauhat)$.}
\end{assumption}

\begin{proposition}[Power loss identity]
\label{thm:variance}
Under the multilevel model~\eqref{eq:model} and Assumption~\ref{ass:calib}, the variance of the CUPAC-adjusted estimator satisfies
\begin{equation}
\label{eq:cupacvar}
\Var(\tauhat_{\mathrm{CUPAC}})
\;\approx\;
\frac{4}{JH}\left[\frac{\MSEw(g)}{\nbar}
+ \MSEm(g)\!\left(\frac{1}{\nbar}+1+\cv^2\right)\right].
\end{equation}
Equivalently, $\Var(\tauhat_{\mathrm{CUPAC}}) \approx \tfrac{4}{JH}\,\Lpow(g)$ with $\Lpow$ defined in~\eqref{eq:Lpow} below.
\end{proposition}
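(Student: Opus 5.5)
The strategy is to reduce the adjusted estimator to an unadjusted estimator applied to a new outcome, and then reuse the raw switchback variance formula~\eqref{eq:rawvar}. By Assumption~\ref{ass:calib} and its footnote, $y^{\mathrm{adj}}_{b,i}=e_{b,i}+\E[g]$. The constant cancels in the difference-in-means, so $\tauhat_{\mathrm{CUPAC}}$ is exactly the individual-level OLS difference-in-means computed on the residual process $e_{b,i}$.

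The next step is to show that $e_{b,i}$ has the same two-level structure as~\eqref{eq:model}. I would use the decomposition~\eqref{eq:errdecomp}, $e_{b,i}=\ebar_b+(e_{b,i}-\ebar_b)$. The first term is constant within the cell and plays the role of $M_b$. The second term varies within the cell and has zero cell mean, so it plays the role of $\varepsilon_{b,i}$. Its variance is $\MSEw(g)$ and the cell-level variance is $\MSEm(g)$, both by~\eqref{eq:losses}. The two pieces are orthogonal, as used in~\eqref{eq:totaldecomp}, so the residual total variance is $\MSE_{\mathrm{total}}(g)$. The shares then become $\Sres\,\sigtot\mapsto\MSEw(g)$ and $\Smacro\,\sigtot\mapsto\MSEm(g)$.

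Substituting these into~\eqref{eq:rawvar} gives directly
\[
\frac{4}{JH}\Big[\frac{\MSEw}{\nbar}+\MSEm\Big(\frac1{\nbar}+1+\cv^2\Big)\Big].
\]
To confirm that the formula transfers, I would briefly recall its origin. The OLS estimator is a ratio of size-weighted cell sums, $\sum_b W_b n_b\ybar_b/\sum_b W_b n_b$. Linearizing (delta method) around $\nbar$ gives contributions from two sources. The within part contributes $\sum n_b\sigma^2_{\mathrm{res}}/(\sum n_b)^2\approx \sigma^2_{\mathrm{res}}/(B\nbar)$ per arm. The cell-constant part contributes $\sum n_b^2\sigma^2_{\mathrm{macro}}/(\sum n_b)^2=(1+\cv^2)\sigma^2_{\mathrm{macro}}/B$ per arm. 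With $B/2$ cells per arm and two arms, this yields the factor $4/(JH)$.

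The main obstacle is justifying that the residual components satisfy the independence and homogeneity conditions under which~\eqref{eq:rawvar} was derived. The issue is the $1/\nbar$ term attached to $\MSEm$. In~\eqref{eq:rawvar} that term arises because the macro share is bookkept alongside the within-cell noise. Here, by~\eqref{eq:macro_dgp}, $\ebar_b$ already contains $\bar\varepsilon_b$, so the within-cell noise is partly absorbed into $\MSEm$. I would handle this in the same approximate sense as the source formula: I would treat $\ebar_b$ and the within-cell deviations as orthogonal random effects, independent across cells and independent of $n_b$ and $W_b$. Conditional unbiasedness from Assumption~\ref{ass:calib} ensures the residual has mean zero given the features, so no covariance with assignment arises. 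I would then accept the $1/\nbar$ correction as the lower-order term inherited from~\eqref{eq:rawvar}. Finally, the equivalent form $\tfrac{4}{JH}\Lpow(g)$ follows by defining $\Lpow$ as the bracketed expression.
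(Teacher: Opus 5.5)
Your proposal is the paper's proof: with $\theta=1$ and calibration the adjusted outcome is $e_{b,i}$ up to a constant that cancels, and substituting $\sigtot\Sres\mapsto\MSEw(g)$ and $\sigtot\Smacro\mapsto\Var(\ebar_b)=\MSEm(g)$ into~\eqref{eq:rawvar} gives~\eqref{eq:cupacvar}, with $\Lpow$ just naming the bracket. Your delta-method check and your $\bar\varepsilon_b$ caveat go beyond the paper's proof, which makes the same identification without comment; one small correction is that this contamination enters through the $1+\cv^2$ factor, not the $1/\nbar$ piece of the multiplier (under a random-effects model for $e$ it adds about $(1+\cv^2)\MSEw(g)/\nbar$ to the bracket), so it is negligible only next to the macro term and is absorbed by the $\approx$.
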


The identity is immediate: under calibration the adjusted outcome is just the residual $e_{b,i}=y_{b,i}-g_{b,i}$, so substituting its within- and between-cell variance components (which are $\MSEw$ and $\MSEm$) for the a-priori shares $\Sres,\Smacro$ in the raw power formula~\eqref{eq:rawvar} gives~\eqref{eq:cupacvar} (Appendix~\ref{app:proofs}).

The identity replaces the \emph{a-priori} shares $\Sres,\Smacro$ of the raw formula by the \emph{post-adjustment residual} losses $\MSEw,\MSEm$: the control variate reduces estimator variance precisely by shrinking these two residual losses, each weighted by its design multiplier.
We write these multipliers as
\begin{equation}
\label{eq:abweights}
a = \frac{1}{\nbar}
\qquad\text{(within-cell weight)},
\qquad
b = \frac{1}{\nbar}+1+\cv^2
\qquad\text{(between-cell weight)},
\end{equation}
so that $\Var(\tauhat_{\mathrm{CUPAC}})\approx\tfrac{4}{JH}\big[a\,\MSEw(g)+b\,\MSEm(g)\big]$, and refer to $a$ and $b$ throughout the remainder of the paper.

The design weights $a$ and $b$ of~\eqref{eq:abweights} are far from equal, yet the objective used to train $g$ in practice is blind to this asymmetry.
Plain unit-level MSE splits as $\MSEw+\MSEm$ and so implicitly assigns the within- and between-cell losses equal weight.

\begin{corollary}[Implicit weighting of the conventional objective]
\label{prop:misalign}
The unweighted unit-level loss $\MSE_{\mathrm{total}}=\MSEw+\MSEm$ assigns the within- and between-cell losses equal effective weight ($b/a=1$), whereas Proposition~\ref{thm:variance} requires
\begin{equation}
\label{eq:ratio}
\frac{b}{a}=\frac{\tfrac{1}{\nbar}+1+\cv^2}{\tfrac{1}{\nbar}}=1+\nbar(1+\cv^2);
\end{equation}
it therefore under-weights $\MSEm$ by the factor $1+\nbar(1+\cv^2)$ (proof in Appendix~\ref{app:proofs}).
\end{corollary}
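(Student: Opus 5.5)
The plan is to compare the two objectives as linear functionals of the same pair of coordinates, $(\MSEw(g),\MSEm(g))$, and read off the relative weights. By the orthogonal decomposition~\eqref{eq:totaldecomp}, the conventional objective is
\[
\MSE_{\mathrm{total}}(g)=1\cdot\MSEw(g)+1\cdot\MSEm(g),
\]
so its coefficients are $(1,1)$. Proposition~\ref{thm:variance} shows that the power-relevant objective is $a\,\MSEw(g)+b\,\MSEm(g)$, with $a,b$ as in~\eqref{eq:abweights}. This holds up to the positive constant $4/(JH)$, which does not affect minimizers or relative weights.

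Two nonnegative linear objectives that differ only by a positive overall scale induce the same ordering of predictors. So the meaningful invariant is the ratio of the between-cell coefficient to the within-cell coefficient. For the conventional loss this ratio is $1/1=1$. For the power loss it is $b/a$.

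Next I compute that ratio. Since $a=1/\nbar>0$, we have
\[
\frac{b}{a}=\nbar\Big(\tfrac{1}{\nbar}+1+\cv^2\Big)=1+\nbar(1+\cv^2),
\]
which is~\eqref{eq:ratio}. Rescaling $\MSE_{\mathrm{total}}$ so that its within-cell weight matches $a$ gives $a\,\MSEw+a\,\MSEm$. Its between-cell weight then falls short of $b$ by exactly the factor $b/a=1+\nbar(1+\cv^2)$, which is the stated under-weighting. Because $\nbar\ge 1$ and $\cv^2\ge 0$, this factor is at least $2$, so the under-weighting is strict.

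The only step needing care is justifying that the ratio of coefficients, rather than their absolute values, is the right notion of "effective weight". I would make this explicit by noting that $\argmin_g$ of $c_1\MSEw+c_2\MSEm$ is invariant to multiplying $(c_1,c_2)$ by a positive scalar. Hence only the direction $(1,c_2/c_1)$ matters. Everything else is algebra that rests on Proposition~\ref{thm:variance} and~\eqref{eq:totaldecomp}, both of which are already established.
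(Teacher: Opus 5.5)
Your proposal is correct and follows the paper's proof: you read the coefficients $(1,1)$ off the orthogonal decomposition~\eqref{eq:totaldecomp}, compare them with $(a,b)$ from~\eqref{eq:abweights}, and compute $b/a=1+\nbar(1+\cv^2)$. Your remark that minimizers are invariant to a positive rescaling of the coefficients, so only their ratio matters, spells out what the paper leaves implicit in the phrase ``effective weight.''
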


The conventional objective thus never controls \emph{how much} it penalizes the between-cell error; supplying that correction is the role of the power-optimal loss.

The power-optimal loss adopts these design weights directly.

\begin{definition}[Power-optimal loss]
\label{def:Lpow}
Given design constants $\nbar$ and $\cv^2$, define
\begin{equation}
\label{eq:Lpow}
\Lpow(g)
= \frac{1}{\nbar}\,\MSEw(g)
+ \left(\frac{1}{\nbar}+1+\cv^2\right)\MSEm(g).
\end{equation}
\end{definition}

Because $\Var(\tauhat_{\mathrm{CUPAC}})\approx\tfrac{4}{JH}\Lpow(g)$ is an equality up to the fixed constant $4/JH$ under Assumption~\ref{ass:calib}, minimizing $\Lpow$ \emph{is} minimizing the calibrated estimator variance; once $\theta$ is estimated rather than fixed at $1$, the relevant guarantee is the upper bound of Theorem~\ref{thm:upperbound}.
Rescaling by $\nbar$ gives an interpretable \emph{penalty form}, $\Lpow(g)\propto\MSE_{\mathrm{total}}(g)+\lambda\,\MSEm(g)$ with $\lambda=\nbar(1+\cv^2)$: train on ordinary MSE, then add a between-cell penalty whose strength is the mean cell density inflated by cell-size dispersion---a measurable design quantity, not a tuning knob.
The weights depend only on the two design constants $\nbar$ and $\cv^2$, which are read directly from the experiment's cell-size distribution; they are metric- and segment-specific, since subset metrics (e.g.\ regulatory vs.\ non-regulatory markets) have different cell densities and hence different $\lambda$, so a control variate trained per target metric uses its own $(\nbar,\cv^2)$.

\subsection{Power-Aligned Analysis-Time Covariate Coefficient}
\label{sec:theta}

The analysis so far optimizes the predictor $g$ while holding the adjustment coefficient at the calibrated value $\theta=1$ (Assumption~\ref{ass:calib}).
In deployment, however, $\theta$ is not fixed: it is \emph{estimated} from the experiment data, almost always by least squares.
Optimizing the covariate while estimating the coefficient by conventional unit-level regression re-introduces the very gap $\Lpow$ was designed to close, and it can make a power-optimally trained covariate appear to yield no advantage.
The reason, as we now show, is that the choice of estimator for $\theta$ is subject to \emph{exactly the same} switchback misalignment as the choice of training loss for $g$.

Adjusting the outcome with a coefficient $\theta$ yields the residual $r_{b,i}(\theta)=y_{b,i}-\theta g_{b,i}$ (at $\theta=1$, the residual $e_{b,i}$ of Assumption~\ref{ass:calib}), which splits into within- and between-cell parts exactly as in the error decomposition~\eqref{eq:errdecomp}.
Applying Proposition~\ref{thm:variance} to $r(\theta)$, the estimator variance $\tfrac{JH}{4}\Var(\tauhat_{\mathrm{CUPAC}}(\theta))=a\,\MSEw(r(\theta))+b\,\MSEm(r(\theta))$ is a strictly convex quadratic in $\theta$, minimized at
\begin{equation}
\label{eq:thetastar}
\theta^\star = \frac{a\,C^w + b\,C^m}{a\,V^w_g + b\,V^m_g},
\end{equation}
where $C^w,C^m$ are the within- and between-cell covariances between outcome and prediction and $V^w_g,V^m_g$ the corresponding within- and between-cell variances of $g$ (the second-moment definitions and the derivation are collected in Appendix~\ref{app:proofs}).
Equivalently, $\theta^\star$ is the slope from regressing $y$ on $g$ under the composite loss, $\theta^\star=\argmin_\theta \Lpow(y-\theta g)$: estimating $\theta$ with the \emph{same} loss used to train $g$ yields the variance-minimizing coefficient.

The standard coefficient inherits the same misalignment.
The textbook CUPED/CUPAC coefficient is the unit-level OLS slope, which by the law of total (co)variance is
\begin{equation}
\label{eq:thetaunit}
\theta_{\mathrm{unit}} = \frac{\Cov(y_{b,i},g_{b,i})}{\Var(g_{b,i})} = \frac{C^w+C^m}{V^w_g+V^m_g}.
\end{equation}
Comparing~\eqref{eq:thetaunit} with~\eqref{eq:thetastar}, $\theta_{\mathrm{unit}}$ is exactly the special case $a=b$: it weights the within- and between-cell covariances equally, whereas the variance-optimal coefficient weights the between-cell covariance more heavily by the factor $b/a$.

\begin{proposition}[Analysis-stage misalignment]
\label{prop:theta_misalign}
Relative to the variance-optimal coefficient $\theta^\star$, unit-level OLS estimation of $\theta$ under-weights the between-cell covariance $C^m$ by exactly the factor
\[
\frac{b}{a} = 1+\nbar(1+\cv^2),
\]
the identical misalignment that Corollary~\ref{prop:misalign} establishes for unit-level training of $g$.
\end{proposition}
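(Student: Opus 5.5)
The plan is to derive both coefficients from one template and compare their weights. For a general weight pair $(w_1,w_2)$, define
\[
\theta(w_1,w_2)=\frac{w_1C^w+w_2C^m}{w_1V^w_g+w_2V^m_g}.
\]
This is the minimizer of $w_1\MSEw(r(\theta))+w_2\MSEm(r(\theta))$, where $r(\theta)=y-\theta g$. We will show that $\theta^\star=\theta(a,b)$ and $\theta_{\mathrm{unit}}=\theta(1,1)$. The proposition is then a statement about the relative weight on $C^m$ in each.

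\textbf{Step 1: quadratic expansion in $\theta$.} Apply Proposition~\ref{thm:variance} to the residual $r(\theta)$ and split it as in~\eqref{eq:errdecomp}. The within part is $(y-\ybar_b)-\theta(g-\gbar_b)$ and the between part is $\ybar_b-\theta\gbar_b$, centered. Expanding the squares gives
\[
\MSEw(r(\theta))=V^w_y-2\theta C^w+\theta^2V^w_g,
\qquad
\MSEm(r(\theta))=V^m_y-2\theta C^m+\theta^2V^m_g,
\]
using the second-moment definitions of the Appendix. The objective $w_1\MSEw+w_2\MSEm$ is then a quadratic in $\theta$ with positive leading coefficient $w_1V^w_g+w_2V^m_g$, assuming $g$ is non-degenerate. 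Setting its derivative to zero yields $\theta(w_1,w_2)$; taking $(w_1,w_2)=(a,b)$ recovers~\eqref{eq:thetastar}.

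\textbf{Step 2: OLS as the equal-weight case.} Use the law of total covariance to write
\[
\Cov(y,g)=C^w+C^m,\qquad \Var(g)=V^w_g+V^m_g.
\]
This requires the orthogonality of the within and between parts that underlies~\eqref{eq:totaldecomp}, now applied to cross-products. It follows that $\theta_{\mathrm{unit}}=\theta(1,1)$, which is~\eqref{eq:thetaunit}. Equivalently, $\theta_{\mathrm{unit}}$ minimizes $\MSE_{\mathrm{total}}(r(\theta))$, in parallel with Corollary~\ref{prop:misalign}.

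\textbf{Step 3: compare the weights.} Since $\theta(w_1,w_2)$ is homogeneous of degree zero in $(w_1,w_2)$, only the ratio $w_2/w_1$ matters. This ratio is the weight on the between-cell moments $(C^m,V^m_g)$ relative to the within-cell moments. It equals $1$ for $\theta_{\mathrm{unit}}$ and $b/a$ for $\theta^\star$. Normalizing the within weight to $1$ in both, $C^m$ enters $\theta^\star$ with weight $b/a$ and $\theta_{\mathrm{unit}}$ with weight $1$. OLS therefore under-weights it by exactly $b/a$, and~\eqref{eq:ratio} evaluates this as $1+\nbar(1+\cv^2)$. This is the same factor as in Corollary~\ref{prop:misalign}, because both stages minimize $\MSE_{\mathrm{total}}=\MSEw+\MSEm$ in place of $a\MSEw+b\MSEm$.

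\textbf{Main obstacle.} The delicate point is the precise meaning of ``under-weights by exactly the factor.'' The weights are only defined up to a common scale, so the statement has to be read as a ratio of relative weights after normalization, and I will state that normalization explicitly. A second issue is that the weight on $C^m$ applies equally to $V^m_g$. The cleanest phrasing is that the entire between-cell block of moments, covariance and variance together, is down-weighted by $b/a$. I will also note that the two coefficients coincide when $C^m/V^m_g=C^w/V^w_g$, so the misalignment only affects the value of $\theta$ when the within- and between-cell regression slopes differ.
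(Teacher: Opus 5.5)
Your proposal is correct and follows essentially the paper's own argument. You expand $\MSEw(r(\theta))$ and $\MSEm(r(\theta))$ as quadratics in $\theta$ to obtain $\theta^\star$ from the $(a,b)$-weighted objective, identify $\theta_{\mathrm{unit}}$ through the law of total covariance as the equal-weight case $a=b$, and read off the relative between-cell weight $b/a=1+\nbar(1+\cv^2)$; your remarks on scale normalization and on $V^m_g$ being down-weighted together with $C^m$ make explicit what the paper leaves implicit.
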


The $1{:}1$ versus $1{:}\big(1+\nbar(1+\cv^2)\big)$ misalignment is thus not a property of the training objective in particular; it is a property of \emph{any} unit-level least-squares operation on switchback data, and it strikes twice---once when fitting $g$, and again when fitting $\theta$.
A macro-targeted covariate paired with a unit-level coefficient is throttled at the analysis stage, which explains why a power-optimally trained control variate can fail to realize its predicted variance reduction unless $\theta$ is re-estimated with a macro-aware loss.

Matched estimation yields the best \emph{single} coefficient, but a single coefficient is itself a constraint.
Allowing a separate coefficient per variance level---$\theta_w$ on within-cell deviations and $\theta_m$ on cell means---gives
\begin{equation}
\label{eq:perlevel}
\min_{\theta_w,\theta_m}\mathcal V
= a\,V^w_y\big(1-\rho_w^2\big) + b\,V^m_y\big(1-\rho_m^2\big),
\qquad
\rho_w=\frac{C^w}{\sqrt{V^w_y V^w_g}},\quad
\rho_m=\frac{C^m}{\sqrt{V^m_y V^m_g}},
\end{equation}
attained at $\theta_w^\star=C^w/V^w_g$ and $\theta_m^\star=C^m/V^m_g$.
This is the global optimum over all linear adjustments, and it depends on $g$ \emph{only through the per-level correlations} $\rho_w,\rho_m$: once $\theta$ is fit per level, the calibration \emph{scale} of $g$ is irrelevant.

\begin{corollary}[When a single coefficient suffices]
\label{cor:perlevel}
The single matched coefficient $\theta^\star$ attains the per-level optimum~\eqref{eq:perlevel} if and only if $g$ is \emph{level-calibrated}, i.e.\ the $y$-on-$g$ regression slope is the same within and between cells, $C^w/V^w_g = C^m/V^m_g$.
Otherwise a single coefficient strictly under-performs per-level adjustment.
\end{corollary}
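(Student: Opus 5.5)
The plan is to write the objective as an explicit quadratic in the coefficients and compare the single-coefficient problem with the two-coefficient one. For a general pair $(\theta_w,\theta_m)$ the within-cell residual is $(y-\ybar)-\theta_w(g-\gbar)$ and the between-cell residual is $\ybar-\theta_m\gbar$. Applying Proposition~\ref{thm:variance} to this residual gives
\[
\mathcal V(\theta_w,\theta_m)=a\big(V^w_y-2\theta_w C^w+\theta_w^2V^w_g\big)+b\big(V^m_y-2\theta_m C^m+\theta_m^2V^m_g\big).
\]
This separates into two one-dimensional convex quadratics. Their unique minimizers are $\theta_w^\star=C^w/V^w_g$ and $\theta_m^\star=C^m/V^m_g$, which reproduces~\eqref{eq:perlevel}. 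The single-coefficient problem is the restriction of $\mathcal V$ to the diagonal $\theta_w=\theta_m=\theta$, and its minimizer on that line is $\theta^\star$ of~\eqref{eq:thetastar}.

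\emph{If direction.} Suppose $\theta_w^\star=\theta_m^\star=:\kappa$. Then the unconstrained minimizer lies on the diagonal, so the constrained minimum equals the global one. To confirm this directly, substitute $C^w=\kappa V^w_g$ and $C^m=\kappa V^m_g$ into~\eqref{eq:thetastar}; this gives $\theta^\star=\kappa$.

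\emph{Only-if direction.} This step carries the real content, since it needs strictness, and I would get that from strict convexity. Assume $a,b>0$ (true since $\nbar>0$) and $V^w_g,V^m_g>0$, so that the coefficient ratios are defined. Then $\mathcal V$ is strictly convex in $(\theta_w,\theta_m)$ with Hessian $\mathrm{diag}(2aV^w_g,2bV^m_g)\succ0$, and so it has a unique global minimizer $(\theta_w^\star,\theta_m^\star)$. If $\theta_w^\star\neq\theta_m^\star$, this minimizer is off the diagonal. Every diagonal point, including $(\theta^\star,\theta^\star)$, is therefore a different point, and uniqueness gives $\mathcal V(\theta^\star,\theta^\star)>\min\mathcal V$. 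This is the strict under-performance claim.

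To make the gap explicit, I would complete the square:
\[
\mathcal V(\theta,\theta)-\min\mathcal V=aV^w_g(\theta-\theta_w^\star)^2+bV^m_g(\theta-\theta_m^\star)^2 .
\]
Minimizing the right-hand side over $\theta$ gives the weighted mean $\theta^\star=(aV^w_g\theta_w^\star+bV^m_g\theta_m^\star)/(aV^w_g+bV^m_g)$, consistent with~\eqref{eq:thetastar}. The minimal gap is
\[
\frac{aV^w_g\,bV^m_g}{aV^w_g+bV^m_g}\,(\theta_w^\star-\theta_m^\star)^2 .
\]
This is zero if and only if $C^w/V^w_g=C^m/V^m_g$, which proves both directions at once.

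The main obstacle is the nondegeneracy conditions rather than any calculation. If $V^w_g=0$ (for instance, $g$ constant within cells), the within-level coefficient is unidentified and the level-calibration ratio is undefined. I would state $V^w_g,V^m_g>0$ as a standing assumption, and note that in the degenerate case a single coefficient trivially matches the per-level optimum.
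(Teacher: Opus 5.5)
Your proof is correct and follows the paper's own route (the appendix derivation of the matched coefficient): the variance separates into two strictly convex per-level quadratics with minimizers $C^w/V^w_g$ and $C^m/V^m_g$, and the single coefficient is the restriction to $\theta_w=\theta_m$, so it attains the per-level optimum exactly when those two minimizers coincide. The paper leaves the corollary's proof implicit, and your completed-square gap $\frac{aV^w_g\,bV^m_g}{aV^w_g+bV^m_g}(\theta_w^\star-\theta_m^\star)^2$ and your explicit nondegeneracy assumption $V^w_g,V^m_g>0$ are correct additions beyond it.
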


Two consequences follow, and together they sharpen the message of the training-loss analysis above.
First, because the attainable variance~\eqref{eq:perlevel} depends on $g$ only through $\rho_w,\rho_m$, the training loss can improve switchback power \emph{only insofar as it raises the between-cell correlation} $\rho_m$.
Two objectives that differ merely in how they \emph{scale} the macro component of $g$---but induce the same $\rho_m$---are indistinguishable once $\theta$ is fit per level, because the coefficient absorbs the scale.
The power-optimal loss therefore helps not by rescaling $g$ (the coefficient undoes scaling) but by reallocating finite model capacity toward raising $\rho_m$ whenever capacity or regularization forces a trade-off between the within- and between-cell fits; absent such a trade-off, all reasonable losses reach the same $\rho_m$ and hence the same optimal power.
Second, the practically robust recipe is to align \emph{both} stages: train $g$ to raise $\rho_m$ under $\Lpow$, and estimate $\theta$ per variance level (equivalently, under the matched composite loss), so that the realized variance reduction is insensitive to any residual miscalibration of $g$.

We have so far treated two levers separately: the loss that trains the covariate $g$, and the coefficient $\theta$ that residualizes with it at analysis time.
These interact, so we must check that optimizing them separately is coherent.
On the one hand, $\Lpow(g)$ is what we minimize when \emph{training} $g$, and it is measured at the calibrated coefficient $\theta=1$ (Assumption~\ref{ass:calib}).
On the other hand, the variance actually realized in \emph{deployment} uses the best coefficient the second lever can supply---the per-level optimum $(\theta_w^\star,\theta_m^\star)$ of~\eqref{eq:perlevel}.
Write $\tauhat^{\,\theta^\star}_{\mathrm{CUPAC}}$ for that deployed estimator: the CUPAC estimator with $g$ held fixed and $\theta$ fit optimally per variance level.

The concern is that these two could disagree---that pushing down the training surrogate $\Lpow(g)$ at $\theta=1$ might not lower the variance realized once $\theta$ is re-fit.
The following theorem rules this out: the training surrogate is always an upper bound on the deployed variance, so minimizing $\Lpow$ can only help.

\begin{theorem}[Upper-bound surrogate]
\label{thm:upperbound}
Let $\rho_k$ denote the correlation between the level-$k$ component of $y$ and of $g$, for $k$ ranging over the within-cell and between-cell levels, and let $\Var(\tauhat^{\,\theta^\star}_{\mathrm{CUPAC}})$ be the estimator variance when $\theta$ is fit at the level-optimal values $\theta_k^\star$.
Then
\[
\frac{JH}{4}\,\Var(\tauhat^{\,\theta^\star}_{\mathrm{CUPAC}})
\;\le\; \Lpow(g),
\]
with equality iff $g$ is \emph{calibrated} at every level, i.e.\ the level-$k$ regression slope of $y$ on $g$ equals $1$ ($\sigma_{g,k}=\rho_k\sigma_{y,k}$).
\end{theorem}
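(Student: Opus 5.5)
The plan is to reduce both sides of the inequality to per-level quantities and compare them level by level. I will take the per-level optimum~\eqref{eq:perlevel} as given for the left-hand side. With $\theta$ fit at $\theta_w^\star,\theta_m^\star$, it reads
\[
\frac{JH}{4}\Var(\tauhat^{\,\theta^\star}_{\mathrm{CUPAC}}) = a\,V^w_y(1-\rho_w^2)+b\,V^m_y(1-\rho_m^2).
\]
For the right-hand side, I will expand $\Lpow(g)=a\,\MSEw(g)+b\,\MSEm(g)$ using the second-moment definitions from the appendix. Since $\MSEw(g)=\E[((y-\ybar_b)-(g-\gbar_b))^2]$, this gives $\MSEw(g)=V^w_y-2C^w+V^w_g$, and the analogous expression $\MSEm(g)=V^m_y-2C^m+V^m_g$ holds at the cell-mean level. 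The between-cell identity requires centering, so I will either treat $\MSEm$ as the centered between-cell second moment or note that any mean offset only adds a nonnegative term to $\Lpow$ and so strengthens the inequality.

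The comparison is then done separately at each level $k\in\{w,m\}$. Write $\sigma_{y,k}^2=V^k_y$, $\sigma_{g,k}^2=V^k_g$, and $C^k=\rho_k\sigma_{y,k}\sigma_{g,k}$. The level-$k$ excess of the calibrated loss over the optimal residual variance is
\[
\big(\sigma_{y,k}^2-2\rho_k\sigma_{y,k}\sigma_{g,k}+\sigma_{g,k}^2\big)-\sigma_{y,k}^2(1-\rho_k^2)=(\sigma_{g,k}-\rho_k\sigma_{y,k})^2\ge 0.
\]
This is just the statement that the fixed coefficient $\theta=1$ is feasible in the per-level minimization, so the minimum cannot exceed its value. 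Multiplying by the design weights $a,b>0$ and summing gives
\[
\Lpow(g)-\frac{JH}{4}\Var(\tauhat^{\,\theta^\star}_{\mathrm{CUPAC}})=a(\sigma_{g,w}-\rho_w\sigma_{y,w})^2+b(\sigma_{g,m}-\rho_m\sigma_{y,m})^2,
\]
which establishes the inequality.

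Equality holds if and only if both squares vanish, because $a,b>0$. That is, $\sigma_{g,k}=\rho_k\sigma_{y,k}$ for each $k$, which is equivalent to $C^k/V^k_g=1$, i.e.\ a level-$k$ slope of one. Degenerate levels with $V^k_g=0$ need a separate check: there the excess is zero automatically, consistent with $\sigma_{g,k}=0$.

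I expect the main obstacle to be bookkeeping rather than inequality. The approximation ``$\approx$'' in Proposition~\ref{thm:variance} must be applied identically to the residual $y-g$ and to $r(\theta_w^\star,\theta_m^\star)$. This holds because that formula depends on the adjusted outcome only through its within- and between-cell second moments. I will also check that the between-cell second moments $V^m$ and $C^m$ are defined as the same (cell-unweighted) moments that enter $\MSEm$, so that the two sides are compared on the same footing.
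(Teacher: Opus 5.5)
Your proposal is correct and follows the paper's proof. The paper also compares, level by level, the calibrated MSE component $\sigma^2_{y,k}(1-2\rho_k s_k+s_k^2)$ with the optimal-$\theta$ residual $\sigma^2_{y,k}(1-\rho_k^2)$, where $s_k=\sigma_{g,k}/\sigma_{y,k}$. Your gap $(\sigma_{g,k}-\rho_k\sigma_{y,k})^2$ is just $\sigma^2_{y,k}(s_k-\rho_k)^2$ with the division cleared, and the paper likewise sums with the design weights. One caveat: if you keep $\MSEm$ uncentered, a nonzero mean offset preserves the inequality but breaks the stated ``iff''. So commit to the centered convention the paper uses for $V^m_y, V^m_g, C^m$ rather than leaving it as an either/or.
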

\noindent\emph{Proof idea.} At each level the MSE component dominates the optimal-$\theta$ residual variance $\sigma^2_{y,k}(1-\rho_k^2)$, with equality when $g$ is calibrated at that level; summing across levels with the design weights of~\eqref{eq:cupacvar} gives the bound (Appendix~\ref{app:proofs}).

In words: minimizing $\Lpow$ (the first lever) minimizes an upper bound on the variance that survives re-fitting $\theta$ (the second lever), so the two levers do not work against each other.
Keeping $g$ calibrated (e.g.\ via isotonic or Platt recalibration) makes the bound tight, and re-fitting $\theta$ at analysis time recovers any remaining slack for free.

\section{Implementation and Practical Considerations}
\label{sec:impl}

The framework of Section~\ref{sec:theory} prescribes the three aligned corrections but is silent on when they are worth adopting and how to deploy them without introducing new problems.
This section takes up both questions.
We first discuss the covariate structures under which alignment actually moves power---which hinges on whether a within-versus-between capacity trade-off binds---and then turn to the practical hazards of estimating a macro-weighted loss from the comparatively small number of cells, together with the design choices that mitigate them.

\subsection{When Alignment Helps: Feature Structure and Model Capacity}
\label{sec:whenuseful}

The gains from power-aligned training are not uniform: they depend on the structure of the available covariates and on whether model capacity binds.
The per-level analysis of Equation~\eqref{eq:perlevel} makes the condition precise, since the attainable estimator variance depends on the covariate only through the within- and between-cell correlations $\rho_w$ and $\rho_m$.
A training loss can therefore improve power only insofar as it raises $\rho_m$, the correlation between the covariate's cell means and the cell-mean outcome, and whether $\Lpow$ does so over a unit-level loss hinges entirely on the feature set.

When the features that predict the cell-mean signal are abundant, cheap, and distinct from those that predict within-cell noise---so that a model can drive both $\rho_w$ and $\rho_m$ to their ceilings at once---the training objective is immaterial: all reasonable losses reach the same per-level correlations, and hence the same power once $\theta$ is fit per level.
The composite loss earns its keep only when a genuine within-versus-between capacity trade-off binds, that is, when macro-predictive features must compete for finite model capacity (or a fixed regularization budget) with a much larger set of within-cell-predictive features.
In that case a unit-level objective spends its capacity on the within-cell signal, where the squared-error mass lives, and sacrifices the very $\rho_m$ that governs power---exactly the regime the simulation study of Section~\ref{sec:sims} isolates using two covariates with deliberately conflicting within- and between-cell profiles under a fixed Ridge penalty.

This binding case is the common one in practice, because the feature set for a switchback covariate typically mixes a small number of cell-level (macro) predictors---such as lagged cell-mean rates, cluster attributes, or time-of-day effects---with a much larger set of fine-grained unit-level predictors.
The mechanism by which standard training then discards the macro predictors is not specific to any one model, but recurs across model classes.
In any model with finite capacity or regularization constraints, standard training objectives allocate capacity to reduce the largest source of loss.
Because individual-level residual variance $\sigma^2_{\mathrm{res}}$ dominates the unweighted sum of squares, and because the number of individual observations $N$ is orders of magnitude larger than the number of cells $B$, the objective function is highly insensitive to cell-level predictions.

First, in linear models (such as OLS, Ridge, or Lasso), any L1 or L2 regularization penalty will shrink the coefficients of cell-level features to zero long before it shrinks individual-level features, because the marginal reduction in unit-level MSE from cell-level features is too small to overcome the shrinkage penalty.
Second, in tree-based models (such as CART, Random Forests, or LightGBM), split criteria greedily maximize unit-level sum-of-squared-error (SSE) reduction.
An individual-level split that reduces within-cell variance by a tiny fraction yields a total SSE reduction that scales with $N$, whereas a cell-level split that explains a large fraction of macro variance yields a reduction that scales only with $B$.
Consequently, standard decision trees will greedily split on individual-level features first, completely ignoring cell-level features before hitting depth limits, while standard hyper-parameters (such as minimum child weight or L2 leaf regularization) will actively prune out cell-level splits.
Third, in neural networks (such as Multi-Layer Perceptrons), weight decay and early stopping will decay weights for cell-level features to zero because their gradients are dominated by the high-frequency individual-level patterns.
By up-weighting the cell-level prediction error by $\lambda = \nbar(1+\cv^2)$, the power-optimal loss $\Lpow$ scales up the gradients, Hessians, and SSE reductions of cell-level features, forcing any of these algorithms to prioritize and preserve them.
The practical implication is a simple diagnostic: the methodology is worth adopting precisely when the covariate set contains genuine cell-level predictors that a unit-level objective would otherwise leave on the table.

\subsection{Practical Considerations}
\label{sec:practice}

The effective sample size behind the macro loss governs its overfitting risk.
The within-cell loss is supported on $N$ (millions of) units, but the macro loss is supported on only $B=JH$ cells.
Up-weighting $\MSEm$ by $\lambda=\nbar(1+\cv^2)$ therefore concentrates the effective objective on a much smaller sample, raising overfitting risk for the between-cell component.
For this reason, model selection is more reliable when based on a \emph{cell-level temporal (or cluster) holdout} rather than a random unit split, since the apparent macro gains estimated on a random split tend not to transfer to the experiment window.

The macro target is itself biased in small cells.
The empirical cell mean satisfies $\Var(\ybar_b)=\sigma^2_{\mathrm{macro}}+\sigma^2_{\mathrm{res}}/n_b$, so for small cells the macro target $\ybar_b$ is contaminated by within-cell sampling noise.
This biases $\MSEm$ upward and is most severe for low-density segments (e.g.\ regulatory markets, where $\nbar$ is small).
Mitigations: (i) weight the macro term by $n_b$ (inverse-variance weighting of the cell mean), or (ii) shrink the target toward a hierarchical (empirical-Bayes) cell-mean estimate.
Note that small $\nbar$ also \emph{lowers} the optimal $\lambda$ via~\eqref{eq:ratio}, so the relative importance of within-cell prediction genuinely rises in low-density segments---the formula prescribes the correct trade-off.

\section{Simulation Study}
\label{sec:sims}

The theory makes two sharp, testable predictions: alignment should be immaterial when the macro signal is abundant, but should recover an increasing share of power as within-cell noise comes to dominate, and it should help only when a genuine within-versus-between capacity trade-off binds.
We test both in a controlled Monte Carlo study that reproduces the multilevel design of Section~\ref{sec:theory} and pits the naive estimator against the fully aligned one---and against each correction in isolation---across a sequence of macro-variance shares.
We first describe the data-generating process and the estimators under comparison, then report the results.%

\subsection{Simulation Design}
\label{sec:sim_design}

We simulate switchback experiments from the multilevel model of Equation~\eqref{eq:model}, drawing the cluster, time, interaction, and residual components as mutually independent Gaussian variates and the within-cell measurement counts from a lognormal--Poisson mixture.
The design comprises $J=200$ clusters and $H=24$ time windows, for $B=JH=4{,}800$ randomization cells, with a mean cell density of $\nbar=180$ measurement units and a cell-size coefficient of variation of $\cv=1.5$; treatment is assigned independently at the cell level with probability one half and enters the outcome additively.%

The central design axis is the macro variance share $\Smacro$.
We hold the total outcome variance fixed and consider three regimes, $\Smacro\in\{0.50,\,0.25,\,0.15\}$, distributing $\Smacro$ across the cluster, time, and interaction components in fixed proportion---so that $\Smacro=0.50$ reproduces $(\Sint,\Scl,\Stime)=(0.20,0.15,0.15)$---and assigning the remainder to the residual share $\Sres=1-\Smacro$.
Decreasing $\Smacro$ moves the design toward the idiosyncratic-noise-dominated regime that is characteristic of fine-grained event data, in which the within-cell residual accounts for the large majority of unit-level variance.%

The control variate is a Ridge regression on two engineered covariates with deliberately conflicting within- and between-cell profiles.
Each covariate is a noisy linear combination of the standardized macro signal and the standardized within-cell signal, one loading predominantly on the macro signal and the other on the within-cell signal; the macro proxy additionally carries a cell-persistent noise component, so that the two covariates are not collinear across cell means.
This construction realizes the capacity trade-off characterized in Section~\ref{sec:theta}: because the within-optimal and between-optimal predictors point in different directions and a fixed, loss-independent Ridge penalty limits capacity, the training loss determines how predictive capacity is allocated across the two levels, and hence the between-cell correlation $\rho_m$ that the fitted covariate attains.\footnote{As established around Equation~\eqref{eq:perlevel}, when capacity is unconstrained, or when a single clean covariate serves each level, all reasonable losses attain the same $\rho_m$ and are therefore indistinguishable once $\theta$ is fit per level. Exhibiting a training-loss effect thus requires a genuine within-versus-between capacity trade-off, which the conflicting-covariate construction supplies.}%

We compare a two-by-two factorial of the training loss and the analysis-time coefficient.
The training loss is either unit-level MSE or the power-optimal composite loss $\Lpow$ of Equation~\eqref{eq:Lpow}, and the coefficient is either the unit-level OLS slope $\theta_{\mathrm{unit}}$ of Equation~\eqref{eq:thetaunit} or the per-level pair $(\theta_w,\theta_m)$ of Equation~\eqref{eq:perlevel}.
The four combinations range from the naive baseline (MSE with $\theta_{\mathrm{unit}}$) to the fully aligned estimator ($\Lpow$ with per-level $\theta$), with the two intermediate estimators isolating the contribution of each stage; the unadjusted difference-in-means serves as the reference.
For each configuration we perform $1{,}000$ Monte Carlo replications, assigning each regime a disjoint block of random seeds so that the regimes are statistically independent while all estimators within a regime share seeds, and report the mean standard error relative to the unadjusted estimator, the rejection rate under a fixed alternative calibrated to a naive-estimator power near $0.33$ in the balanced regime, and the false positive rate under the null; all standard errors are cluster-robust at the cluster level.%

\subsection{Results}
\label{sec:sim_results}

The alignment gains predicted by the theory materialize precisely in the regime that characterizes fine-grained switchback data.
Table~\ref{tab:sims} reports the three regimes, and Figure~\ref{fig:sims} plots the relative standard error of the naive and aligned estimators as a function of $\Smacro$.%

\begin{table}[t]
\centering
\caption{Monte Carlo comparison across macro variance shares $\Smacro$ ($1{,}000$ replications each, with a disjoint block of random seeds per regime). Standard errors are reported relative to the unadjusted estimator; power uses a fixed alternative calibrated to naive power $\approx 0.33$ in the balanced regime. The aligned estimator's advantage over the naive baseline grows as $\Smacro$ falls.}
\label{tab:sims}
\begin{tabular}{llccc}
\toprule
$\Smacro$ & Estimator & $\mathrm{SE}/\mathrm{SE}_{\mathrm{raw}}$ & Power & $\rho_m$ \\
\midrule
$0.50$ & naive (MSE, $\theta_{\mathrm{unit}}$)                 & $0.531$ & $0.33$ & $0.851$ \\
       & \quad $+$ per-level $\theta$                          & $0.531$ & $0.33$ & $0.851$ \\
       & \quad $+$ $\Lpow$                                     & $0.504$ & $0.35$ & $0.864$ \\
       & aligned ($\Lpow$, per-level $\theta$)                 & $0.502$ & $0.35$ & $0.864$ \\
\midrule
$0.25$ & naive (MSE, $\theta_{\mathrm{unit}}$)                 & $0.649$ & $0.39$ & $0.813$ \\
       & \quad $+$ per-level $\theta$                          & $0.584$ & $0.47$ & $0.813$ \\
       & \quad $+$ $\Lpow$                                     & $0.602$ & $0.43$ & $0.857$ \\
       & aligned ($\Lpow$, per-level $\theta$)                 & $0.500$ & $0.59$ & $0.857$ \\
\midrule
$0.15$ & naive (MSE, $\theta_{\mathrm{unit}}$)                 & $0.821$ & $0.41$ & $0.778$ \\
       & \quad $+$ per-level $\theta$                          & $0.620$ & $0.60$ & $0.778$ \\
       & \quad $+$ $\Lpow$                                     & $0.776$ & $0.46$ & $0.843$ \\
       & aligned ($\Lpow$, per-level $\theta$)                 & $0.507$ & $0.77$ & $0.843$ \\
\bottomrule
\end{tabular}
\end{table}

In the balanced regime $\Smacro=0.50$, in which the within- and between-cell shares are equal, the aligned estimator improves on the naive baseline only marginally, reducing the standard error by roughly five percent, and the per-level coefficient is indistinguishable from the unit-level coefficient.
This is the anticipated behavior: when the macro signal constitutes a large share of total variance, the unit-level loss already allocates substantial capacity to it, so little remains for the alignment to recover.
This regime serves as a control, confirming that the method does not manufacture gains where none are warranted.%

As $\Smacro$ decreases and within-cell noise comes to dominate, the naive estimator degrades sharply: its relative standard error rises from $0.531$ to $0.821$, approaching the unadjusted benchmark, whereas the aligned estimator remains essentially flat at approximately $0.50$ across all three regimes.
The advantage of alignment over the naive baseline therefore grows monotonically, from $5.5\%$ at $\Smacro=0.50$ to $23.0\%$ at $\Smacro=0.25$ and $38.3\%$ at $\Smacro=0.15$, with a corresponding increase in power from $0.41$ to $0.77$ in the most idiosyncratic regime.
This pattern is the empirical counterpart of the variance identity of Proposition~\ref{thm:variance}: the unit-level loss allocates capacity by total-variance share, while the estimator variance is governed by the macro share, and the two diverge exactly as $\Sres$ grows.%

The two intermediate estimators show that the training-loss and coefficient corrections are complementary and that neither alone suffices in the idiosyncratic regime.
At $\Smacro=0.15$, re-estimating the coefficient per level alone lowers the relative standard error from $0.821$ to $0.620$, and re-training under $\Lpow$ alone lowers it to $0.776$, but only their combination attains $0.507$.
The magnitude of the coefficient correction is itself increasing as $\Smacro$ falls---it is negligible at $\Smacro=0.50$ but substantial at $\Smacro=0.15$---consistent with Proposition~\ref{prop:theta_misalign}: as the within-cell share grows, the unit-level coefficient is drawn toward the within-cell slope and departs from the between-cell slope $\theta_m$, so per-level estimation becomes essential.%

The two corrections are moreover \emph{coupled}, not merely additive, and the factorial isolates the mechanism.
At $\Smacro=0.15$, replacing unit-level MSE with $\Lpow$ while retaining the unit-level coefficient lowers the relative standard error only from $0.821$ to $0.776$, whereas the same substitution under per-level estimation lowers it from $0.620$ to $0.507$: the training-loss upgrade is largely inert under $\theta_{\mathrm{unit}}$ and delivers its full benefit only once the coefficient is fit per level.
The reason is that the loss upgrade is precisely what creates the need for the coefficient upgrade.
Because $\Lpow$ reallocates capacity toward the cell-mean fit at the expense of the within-cell fit, it deliberately drives the within- and between-cell regression slopes of $g$ apart---the level-miscalibration of Corollary~\ref{cor:perlevel}---so a single coefficient, pulled toward the within-cell slope, mis-scales exactly the macro component that $\Lpow$ has improved.
Per-level estimation dissolves the coupling by scaling each level separately, at which point the calibration scale of $g$ is irrelevant and only the correlation $\rho_m$ it attains governs power.
The weighted loss and per-level coefficient are therefore best understood as a single aligned procedure: the loss builds macro predictive quality, and per-level $\theta$ is what converts that quality into realized variance reduction.%

The between-cell correlation $\rho_m$ corroborates the mechanism of Section~\ref{sec:theta}.
The power-optimal loss raises $\rho_m$ in every regime---for example from $0.778$ to $0.843$ at $\Smacro=0.15$---and the reduction in standard error tracks this increase, as the theory requires given that the attainable variance depends on the covariate only through $\rho_w$ and $\rho_m$.
Across all configurations the false positive rate remains near the nominal five percent level, ranging from $0.05$ to $0.07$, and is essentially unchanged between the naive and aligned estimators, confirming that the alignment reduces variance without compromising test validity.%

\begin{figure}[t]
\centering
\includegraphics[width=0.72\textwidth]{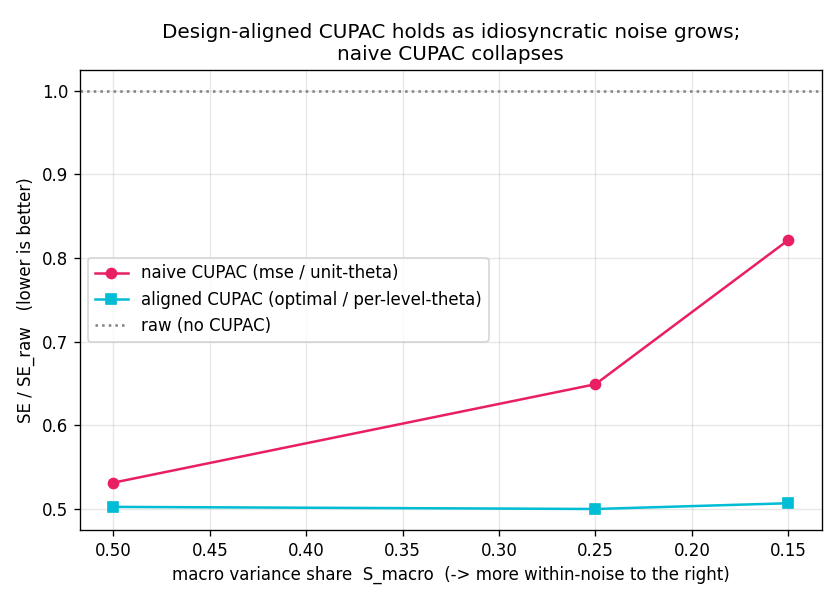}
\caption{Relative standard error $\mathrm{SE}/\mathrm{SE}_{\mathrm{raw}}$ as a function of the macro variance share $\Smacro$ (within-cell noise increases to the right). The naive estimator (unit-level MSE with $\theta_{\mathrm{unit}}$) degrades toward the unadjusted benchmark as $\Smacro$ falls, whereas the aligned estimator ($\Lpow$ with per-level $\theta$) remains stable; the single-stage corrections fall in between. The advantage of alignment is concentrated in the idiosyncratic-noise-dominated regime characteristic of fine-grained switchback data.}
\label{fig:sims}
\end{figure}

\section{Discussion}
\label{sec:discussion}

The result is intuitive in hindsight: experiment power in a switchback is set by the between-cell residual variance, so the control variate should be trained to predict \emph{cell means}, not individual events---and the precise exchange rate between the two goals is $1+\nbar(1+\cv^2)$.
Empirically we expect the largest gains for metrics with (i) large within-cell idiosyncratic noise (low intraclass correlation) and (ii) high cell density $\nbar$, where unit-level MSE is most distracted by within-cell fluctuations.
A natural validation is to train, for primary operational metrics (such as service duration or volume) and per segment, both plain MSE and $\Lpow$ control variates and compare their realized between-cell correlation $\rho_m$ and projected confidence-interval reduction on a held-out period.
The controlled simulation study of Section~\ref{sec:sims} provides such a validation on synthetic data; we leave the corresponding study on production metrics, and the integration of $\Lpow$ into the production training workflow, to future work.

The composite loss confers two distinct benefits, according to whether model capacity binds.
Its primary achievement is to optimize power: when capacity or regularization forces a trade-off
between within-cell and between-cell fit---the regime of the simulation study in
Section~\ref{sec:sims}---$\Lpow$ raises the between-cell correlation $\rho_m$ that a unit-level loss
would sacrifice, and thereby lowers estimator variance relative to any objective that fixes the
within-versus-between weighting at $1{:}1$.
Its secondary achievement is to train the control variate efficiently: when capacity is ample---for
instance, a gradient-boosted model trained without early stopping or depth limits---the covariate can
in principle drive both loss components toward their floor, so that all reasonable losses converge to
the same per-level correlations $\rho_w,\rho_m$ and hence to the same attainable variance reduction,
as the per-level analysis of Equation~\eqref{eq:perlevel} implies.
Even then $\Lpow$ retains value: by up-weighting the macro component of the training objective, it
directs a fixed training budget toward the
between-cell signal that governs power, so that a target between-cell correlation is attained in fewer
iterations and with less capacity expended on within-cell fluctuations that the design averages away.
A systematic study of this training-efficiency benefit, as distinct from the power-optimization
benefit demonstrated here, is left to future work.%

A natural question is whether the macro term should be decomposed further, replacing the
two-component loss---macro versus residual---with four components targeting the cluster, time-of-day,
interaction, and residual variances separately.
Under the variance functional of Proposition~\ref{thm:variance} this refinement yields no additional
benefit: the three macro sub-components share the identical multiplier $\tfrac{1}{\nbar}+1+\cv^2$, so
grouping them into a single macro term is already variance-optimal, and only the macro-versus-residual
split affects power.
This grouping is robust to temporal structure, because cell-level randomization is independent across
time windows and therefore neutralizes temporal autocorrelation in the outcome, leaving the macro
multiplier unchanged~\citep{pankratev2026powerful}.%

\section{Conclusion}
\label{sec:conclusion}

In this paper, we have proposed a switchback-aligned CUPAC framework that adapts the CUPAC workflow---covariate training and experiment analysis---to cluster$\times$time-window designs.
While standard control variates are trained on unit-level MSE and adjusted with the unit-level OLS coefficient, switchback power is dictated by between-cell variance; we proved that each default stage under-weights this macro component by $1+\nbar(1+\cv^2)$, typically two to three orders of magnitude.
This methodology replaces both default stages with objectives tied to the same variance functional: the power-optimal composite loss $\Lpow$, and a design-weighted adjustment coefficient estimated under the same loss.
We showed that $\Lpow$ is a tight upper bound on deployed variance, gave gradient--Hessian and closed-form Ridge formulations, and analyzed when per-level $\theta$ is required.
Together, these components turn ad-hoc feature engineering into a coherent, mathematically grounded workflow that directly optimizes experiment power.

\bibliographystyle{plainnat}
\bibliography{references}

\vspace{1cm}
\appendix
\noindent{\huge \textbf{Appendix}}

\section{Notation Summary}
\label{app:notation}
\begin{center}
\begin{tabular}{ll}
\toprule
Symbol & Meaning \\
\midrule
$J,\,H,\,B=JH$ & clusters, time windows, cells \\
$b=(\mathrm{cl},t),\ i$ & cell (cluster $\times$ time) index and within-cell unit index ($i=1,\dots,n_b$) \\
$n_b,\ \nbar,\ \cv^2$ & cell size, mean cell density, squared CV of cell sizes \\
$\sigtot$ & total outcome variance \\
$\Scl,\Stime,\Sint,\Sres$ & variance shares (cluster, time, interaction, residual) \\
$\Smacro=\Scl+\Stime+\Sint$ & cell-level (macro) share, $\Sres=1-\Smacro$ \\
$g_{b,i}$ & control-variate prediction for unit $i$ in cell $b$ \\
$e_{b,i}=y_{b,i}-g_{b,i},\ \ebar_b$ & residual and its cell mean \\
$\MSEw,\ \MSEm$ & within-cell and between-cell mean squared error \\
$\theta,\ \rho_k$ & CUPED coefficient; level-$k$ outcome--prediction correlation \\
$\lambda=\nbar(1+\cv^2)$ & macro-penalty weight \\
\bottomrule
\end{tabular}
\end{center}

\section{Proofs}
\label{app:proofs}

This appendix collects the proofs of the results stated in Section~\ref{sec:theory}.

\begin{proof}[Proof of Proposition~\ref{thm:variance}]
Under Assumption~\ref{ass:calib} the adjusted outcome is the residual $e_{b,i}=y_{b,i}-g_{b,i}$, which inherits the multilevel structure~\eqref{eq:model} with each component replaced by the corresponding component of $e$.
Substituting the residual variance components into the raw power formula~\eqref{eq:rawvar} and identifying $\sigtot\Sres \mapsto \sigma^2_{\mathrm{res},e}=\E[(e_{b,i}-\ebar_b)^2]=\MSEw$ (the within-cell variance of the residual) and $\sigtot\Smacro \mapsto \sigma^2_{\mathrm{macro},e}=\Var(\ebar_b)=\E[\ebar_b^2]=\MSEm$ (the between-cell variance of the residual, centered by calibration) yields~\eqref{eq:cupacvar}.
\end{proof}

\begin{proof}[Proof of Corollary~\ref{prop:misalign}]
For the unweighted unit-level loss, $\MSE_{\mathrm{total}}=\MSEw+\MSEm$ by~\eqref{eq:totaldecomp}, so it assigns the within- and between-cell losses equal effective weight, $a=b=1$.
The design-required ratio~\eqref{eq:ratio} then follows directly from the weights~\eqref{eq:abweights}.
\end{proof}

\begin{proof}[Derivation of the matched coefficient~\eqref{eq:thetastar}]
Collect the within- and between-cell second moments of $(y,g)$,
\begin{equation}
\label{eq:moments}
\begin{aligned}
V^w_y &= \E[(y_{b,i}-\ybar_b)^2], & V^w_g &= \E[(g_{b,i}-\gbar_b)^2], & C^w &= \E[(y_{b,i}-\ybar_b)(g_{b,i}-\gbar_b)],\\
V^m_y &= \E[\ybar_b^{\,2}], & V^m_g &= \E[\gbar_b^{\,2}], & C^m &= \E[\ybar_b\,\gbar_b],
\end{aligned}
\end{equation}
with cell means centered to mean zero across cells.
Since $r_{b,i}(\theta)=y_{b,i}-\theta g_{b,i}$ is linear in $\theta$, its within- and between-cell losses are $\MSEw(r)=V^w_y-2\theta C^w+\theta^2 V^w_g$ and $\MSEm(r)=V^m_y-2\theta C^m+\theta^2 V^m_g$, so by Proposition~\ref{thm:variance}
\begin{equation}
\label{eq:Vtheta}
\mathcal{V}(\theta) := \frac{JH}{4}\,\Var\!\big(\tauhat_{\mathrm{CUPAC}}(\theta)\big)
= a\big(V^w_y-2\theta C^w+\theta^2 V^w_g\big) + b\big(V^m_y-2\theta C^m+\theta^2 V^m_g\big).
\end{equation}
This is exactly the composite training loss $\Lpow(y-\theta g)$; it is strictly convex in $\theta$, and $\mathcal V'(\theta)=0$ gives~\eqref{eq:thetastar}.
The unit-level slope~\eqref{eq:thetaunit} is the special case $a=b$, and minimizing the two terms of~\eqref{eq:Vtheta} separately gives the per-level optimum~\eqref{eq:perlevel}.
\end{proof}

\begin{proof}[Proof of Theorem~\ref{thm:upperbound}]
At level $k$, the optimal-$\theta$ residual variance is $\sigma^2_{y,k}(1-\rho_k^2)$, whereas the MSE component is $\sigma^2_{y,k}\big(1-2\rho_k s_k + s_k^2\big)$ with $s_k=\sigma_{g,k}/\sigma_{y,k}$.
Since $1-2\rho_k s_k+s_k^2 \ge 1-\rho_k^2$ for all $s_k$, with equality at $s_k=\rho_k$, each MSE component dominates the optimal-$\theta$ residual variance.
Summing the level contributions with the design weights of~\eqref{eq:cupacvar} gives the claim.
\end{proof}

\end{document}